\newtheorem{thm}{Theorem}[section]
\newtheorem{cor}[thm]{Corollary}
\numberwithin{equation}{section}
\newcommand{\eh}{\hfill}\newlength{\sperr}
\newenvironment{proof}{{\settowidth{\sperr}{\bf\rm
Proof}%
\par\addvspace{0.3cm}\noindent\parbox[t]{1.3\sperr}
{\textit{ P\eh r\eh o\eh o\eh f\eh .}}%
}}{\nopagebreak\mbox{}
$\Box$\par\addvspace{0.3cm}}
\newtheorem{Pa}{Paper}[section]
\newtheorem{Pn}[Pa]{{\bf Proposition}}
\title{Algorithmic entropy, thermodynamics, \\ and game interpretation}
\author{Lev Sakhnovich}
\date{}
\begin{document}
\maketitle

\thanks{99 Cove ave., Milford, CT, 06461, USA \\
 E-mail: lsakhnovich@gmail.com}\\
 
 \textbf{Mathematics Subject Classification (2010):} Primary 03D32; \\
 Secondary 68P30, 54C30, 91A05 \\
 
 \textbf{Keywords.} Length of program, Gibbs ensemble,  game theory,
  statistical physics.
\begin{abstract}
Basic relations for the mean length and algorithmic entropy are obtained by solving a new extremal problem. Using this extremal problem, they are 
obtained in a most simple and general way. The length and entropy are considered as two players of a new type of a game, in which we follow
the scheme of our previous work on thermodynamic characteristics in quantum and classical approaches. 
\end{abstract}

\section{Introduction} Algorithmic information theory (AIT) is an important and actively studied domain of  computer science (see, e.g., interesting results and
numerous references in \cite{1, Ch, LV, SeP, 9}).
AIT can be interpreted in terms  of statistical physics (SP) (see \cite{1,9, VP} and references therein). Let us introduce the corresponding notions from AIT and SP.

1. The set of all AIT programs  corresponds to the set of  energy eigenvectors
from SP.

2. The length $\ell_{k}$ of an AIT program  corresponds to the energy eigenvalue $E_{k}$ from SP. (Here and further $k \geq 1$.)

We denote by $P_{k}$ the probability that the length of the program is equal to
$\ell_{k}$, i.e. $P_{k}=P(\ell=\ell_{k}).$ Next, we introduce the notions
of the mean length $L$
(of programs)  and of the entropy $S$:
\begin{equation}\label{1.1}
L=\sum_{k}P_{k}\ell_{k},\qquad S=-\sum_{k}P_{k}\log {P_{k}}.
\end{equation}
 The connection between L and S we interpret in terms of game theory.
The necessity of the game theory approach can be explained in the following
way.  The notion of Gibbs ensemble is introduced in AIT  using an analogy
with the second law of thermodynamics:

\emph{ Gibbs ensemble maximizes  entropy
on the set of programs, where the  values $\{l_k\}$ and $L$ are fixed.}

So, the problem of a conditional
extremum appears. But the corresponding equation for the Lagrange multiplier is transcendental   and very complicated. Therefore, another argumentation is needed to find the
basic Gibbs formulas. This problem exists also  for  the SP case (see \cite[Ch.1, section 1]{2} and \cite[Ch.3, section 28]{3}).
 In the present note we use our approach the extremal SP problem 
 \cite{5,6,7,8}  to treat also to the corresponding AIT problem.
 Namely, we fix the Lagrange multiplier
$\beta=1/kT$. That is, we fix the AIT analogue $T$ of the temperature
from SP and introduce the compromise function
$F={-\beta}L+S$. Then the mean length $L$ and the entropy $S$ are two players of a game and the compromise result
is the extremum  point of the F.  Finally, we note, that the AIT analogue  of temperature was discussed
by K.Tadaki \cite{9}. He proved the following assertion:

If  the temperature is a computable positive number bounded by $1$,  it can be interpreted
as the {\it compression rate} in the AIT analogue of thermodynamic theory.

\section {Connection between length and entropy, \\
a game theoretical point of view}
Let the lengths   $\ell_{k}$ of the programs  be fixed. Consider the
mean length
$L$ and  the entropy $S$, which are given in \eqref{1.1}.
Note that $\sum_{k}P_{k}=1$.
Hence, $P_{k}$ can be represented in the  form
$P_{k}=p_{k}/Z$, where $Z=\sum_{k}p_{k}$.
Our aim is to find the probabilities $P_{k}$.
For that purpose we consider the function
\begin{equation}\label{2.1}
  F=\lambda{L}+S,
\end{equation}
where $\lambda=-\beta=-1/kT$.

{\bf Fundamental Principle.} {\it The function $F$ defines a game between
the mean length $L$ and the entropy $S$.}

To find the stationary point of $F$ we calculate
\begin{equation}\label{2.2}
  \frac{\partial{F}}{\partial{p_{j}}}=\lambda\Big(\ell_{j}/Z-\sum_{k=1}^{\infty}\ell_{k}p_{k}/Z^{2}\Big)-
(\log{p_{j})/Z}+\sum_{k=1}^{\infty}p_{k}\log{p_{k}}/Z^{2}.
\end{equation}
It follows from \eqref{2.2} that the point
\begin{equation}\label{2.3}
  p_{k}=e^{{\lambda}\ell_{k}}, \qquad k=1,2,\ldots
\end{equation}
is a stationary point.  Moreover, the stationary point is unique up to a scalar
multiple. Without loss of generality this multiple can be fixed as in \eqref{2.3}.
By direct calculation we get in the stationary point \eqref{2.3} the equalities
\begin{equation}\label{2.4}
  \frac{\partial^{2}F}{\partial{p_{k}^{2}}}=-Z_k/(p_kZ^2)<0,
\quad Z_k:= \sum_{j\not= k}p_j;
\quad \frac{\partial^{2}F}{\partial{p_{k}}\partial{p_{j}}}=1/Z^{2}>0,\quad j{\ne}k .
\end{equation}
Relations \eqref{2.4} imply the following assertion.

\begin{cor}
 The stationary point  \eqref{2.3} is a maximum  point of the function $F$.
 \end{cor}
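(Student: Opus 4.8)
The plan is to show that the Hessian of $F$ at the stationary point \eqref{2.3}, whose entries are recorded in \eqref{2.4}, is negative semidefinite and that its only degenerate direction is the radial one along which $F$ is constant; this forces \eqref{2.3} to be a maximum point. First I would collect the second derivatives from \eqref{2.4} into a matrix. Writing $H$ for the Hessian and using $Z_k = Z - p_k$, we have $H = Z^{-2} M$, where $M$ has diagonal entries $M_{kk} = -Z_k/p_k = 1 - Z/p_k$ and off-diagonal entries $M_{kj} = 1$ for $j \ne k$; equivalently $M = \mathbf{1}\mathbf{1}^{T} - Z\,D$, with $\mathbf{1}$ the all-ones vector and $D = \mathrm{diag}(1/p_k)$.

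Next I would evaluate the associated quadratic form. For an admissible increment vector $x = (x_k)$,
\[
x^{T} M x \;=\; \Big(\textstyle\sum_k x_k\Big)^{2} \;-\; Z\sum_k \frac{x_k^{2}}{p_k}.
\]
Applying the Cauchy--Schwarz inequality to $\sum_k x_k = \sum_k (x_k/\sqrt{p_k})\cdot\sqrt{p_k}$ gives $\big(\sum_k x_k\big)^{2} \le \big(\sum_k x_k^{2}/p_k\big)\big(\sum_k p_k\big) = Z\sum_k x_k^{2}/p_k$, hence $x^{T} M x \le 0$, with equality exactly when $x_k$ is proportional to $p_k$. Therefore $H \le 0$, and its kernel is precisely the line spanned by $(p_k)$.

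Then I would explain why this single degenerate direction does not obstruct the conclusion: the function $F$ depends on the $p_k$ only through the ratios $P_k = p_k/Z$, so $F$ is invariant under the scaling $p \mapsto c\,p$ — which is exactly the direction $(p_k)$ spanning the kernel of $H$. Fixing the normalization as in \eqref{2.3} (equivalently restricting attention to a transversal slice such as $Z = \mathrm{const}$) removes this direction, and by the equality case above the Hessian is strictly negative definite on the complementary subspace. Since, as already noted, \eqref{2.3} is the unique stationary point up to scaling, it follows that it is a maximum point of $F$, proving the corollary.

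The main obstacle I anticipate is not conceptual but a matter of rigor in the infinite-dimensional setting: justifying the term-by-term differentiation behind \eqref{2.2}--\eqref{2.4}, the convergence of $Z = \sum_k e^{\lambda \ell_k}$ and of $\sum_k x_k^{2}/p_k$, and giving a clean meaning to ``negative definite'' for a countable Hessian. A way to sidestep all of this is to observe directly that $S = -\sum_k P_k \log P_k$ is strictly concave in the probability vector $(P_k)$ while $L = \sum_k P_k \ell_k$ is linear, so $F = \lambda L + S$ is strictly concave on the simplex $\{P_k \ge 0,\ \sum_k P_k = 1\}$; then the stationary point \eqref{2.3} is automatically its unique global maximum. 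I would keep the Hessian computation as the primary argument, to remain consistent with \eqref{2.4}, and note the concavity argument as the structural reason the conclusion must hold.
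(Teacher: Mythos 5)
Your proof is correct, but it takes a genuinely different route from the paper. The paper invokes the Polya--Szeg\H{o} determinant identity \eqref{2.5}--\eqref{2.7} to compute every $n\times n$ leading principal minor of the Hessian in closed form, arriving at $H_n(F)=(-Z)^{-n}\bigl(1-(\sum_{k=1}^n p_k)/Z\bigr)/\prod_{k=1}^n p_k$ and hence $\mathrm{sgn}(H_n(F))=(-1)^n$, and then applies the alternating-minors criterion for a maximum. You instead bypass the determinant identity entirely: writing the Hessian as $Z^{-2}(\mathbf{1}\mathbf{1}^{T}-Z\,\mathrm{diag}(1/p_k))$ and evaluating the quadratic form, Cauchy--Schwarz gives negative semidefiniteness with kernel exactly the scaling ray $(p_k)$, which you correctly identify as the direction of the gauge invariance $p\mapsto cp$ under which $F$ is constant (this is the same degeneracy the paper acknowledges when it says the stationary point is unique only up to a scalar multiple). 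Each approach buys something: the paper's computation is a compact closed-form evaluation, but it works only because any finite block of coordinates misses the full scaling vector --- the factor $1-(\sum_{k=1}^n p_k)/Z$ is strictly positive precisely because there are infinitely many programs, and with finitely many programs the top minor would vanish and the test would be inconclusive, whereas your kernel analysis still settles the matter. Your closing remark that $F=\lambda L+S$ is strictly concave in the probability vector $(P_k)$ (entropy strictly concave, $L$ linear) is a genuine strengthening: it upgrades the paper's local conclusion to a unique global maximum on the simplex, which is what the Gibbs-ensemble interpretation actually needs. The caveats you flag about term-by-term differentiation and convergence of $Z$ and $\sum_k x_k^2/p_k$ apply equally to the paper's own computation, so they do not constitute a gap relative to it.
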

\begin{proof} We shall use the following result (see 
\cite[Ch.7, Problem 7]{4}):
\begin{equation}\label{2.5}
\det\left[\begin{array}{ccccc}
                      r_{1} & a & a & ... & a \\
                      b & r_{2} & a & ... & a \\
                      b & b & r_{3} & ... & a \\
                      ... & ... & ... & ... & ... \\
                      b & b & b & ... & r_{n}
                    \end{array}\right]=\frac{af(b)-bf(a)}{a-b},\end{equation}
 where
 \begin{equation}\label{2.6}
 f(x)=( r_{1}-x)( r_{2}-x)...( r_{n}-x).\end{equation}
In the case that $a=b$ we have
\begin{equation}\label{2.7}
\det\left[\begin{array}{ccccc}
                      r_{1} & a & a & ... & a \\
                      a & r_{2} & a & ... & a \\
                      a & a & r_{3} & ... & a \\
                      ... & ... & ... & ... & ... \\
                      a & b & a & ... & r_{n}
                    \end{array}\right]=-af^{\prime}(a)+f(a).\end{equation}
Using \eqref{2.4} and \eqref{2.7} we can calculate the Hessian $H_{n}(F)$ in the stationary point:
\begin{equation}  \label{2.8}
H_{n}(F)=Z^{-2n}[-f^{\prime}(1)+f(1)],
\end{equation}
where $f$ is given by \eqref{2.6} and $r_{k}=-Z_{k}/p_{k}<0$. Rewrite \eqref{2.8}
in the form
\[
H_{n}(F)=(-Z)^{-n}\Big(1-\big(\sum_{k=1}^n p_k\big)/Z\Big)/\prod_{k=1}^n p_k
\]
to see that the relation
${\mathrm{sgn}\,}\big(H_{n}(F)\big)=(-1)^{n}$
is true. Hence, the corollary is proved.
\end{proof}
So, we proved the proposition below.
\begin{Pn} The mean length and entropy satisfy relations
\begin{align}\label{2.9}&
L=\sum_{k}\ell_{k}e^{\lambda\ell_{k}}/Z, \\
\label{2.10}&
S=-\sum_{k}(e^{\lambda\ell_{k}}/Z)\log (e^{\lambda\ell_{k}}/Z), 
\end{align}
where $Z= \sum_{k}e^{\lambda\ell_{k}}$. 
\end{Pn}
Note that  the basic relations \eqref{2.3}, \eqref{2.9},  and \eqref{2.10}
are obtained by solving a new extremal problem. Namely, in the introduced function F
the parameter $\lambda$ is fixed  instead of the  length $L$, which is usually fixed.


\end{document}